\documentclass[preprint]{imsart}

\usepackage{graphicx}

\usepackage{amsmath}
\usepackage{amsthm}
\usepackage{amsfonts}
\usepackage{amssymb}
\usepackage[colorlinks]{hyperref}

\theoremstyle{plain}
\newtheorem{thm}{Theorem}[section]

\newtheorem{prop}[thm]{Proposition}

\theoremstyle{definition}
\newtheorem{example}[thm]{Example}
\newtheorem{defi}[thm]{Definition}

\theoremstyle{remark}
\newtheorem{remark}[thm]{Remark}



\def\1c{\mathbb{I}_C(x)}

\def\X{\mathcal{X}}

\def\n0{n_{0}}




\begin{document}

\begin{frontmatter}

\title{The Containment Condition and AdapFail
  algorithms
}
\runtitle{Containment and AdapFail}

\begin{aug}
\author{\fnms{Krzysztof} \snm{{\L}atuszy\'{n}ski}}%
\ead[label=e1]{latuch@gmail.com}%
\and
\author{\fnms{Jeffrey S.} \snm{Rosenthal}\corref{}
\ead[label=e3]{jeff@math.toronto.edu}
}

\runauthor{K. {\L}atuszy\'nski and J.S. Rosenthal}

\affiliation{University of Warwick}

\address{K. {\L}atuszy\'nski\\ Department of Statistics\\
University of Warwick\\ CV4 7AL, Coventry, UK \\
\printead{e1}
}

\address{J. S. Rosenthal\\Department of Statistics\\
University of Toronto\\ Toronto, Ontario, Canada,
M5S 3G3\\
\printead{e3}}
\end{aug}

\medskip
(June 2013; slightly revised December 2013)

\begin{abstract} This short note investigates convergence of
adaptive MCMC algorithms, i.e.\ algorithms which modify the Markov
chain update probabilities on the fly.  We focus on the Containment condition
introduced in \cite{roberts2007coupling}.
We show that if the Containment condition
is \emph{not} satisfied, then the algorithm will perform very poorly.
Specifically, with positive probability,
the adaptive algorithm will be asymptotically less efficient then
\emph{any} nonadaptive ergodic MCMC algorithm. We call such algorithms
\texttt{AdapFail}, and conclude that they should not be used.
\end{abstract}

\begin{keyword}[class=AMS]
\kwd[Primary ]{60J05, 65C05}
\end{keyword}

\begin{keyword} 
\kwd{Markov chain Monte Carlo}
\kwd{adaptive MCMC}
\kwd{Containment condition} 
\kwd{ergodicity}
\kwd{convergence rates}
\end{keyword}

\end{frontmatter}

\section{Introduction} \label{sec_intro}

Markov chain Monte Carlo (MCMC) algorithms are used to sample from
complicated probability distributions.  They proceed by simulating
an ergodic Markov chain with transition kernel $P$ and stationary
distribution of interest, say $\pi$. Unlike in the case of iid
Monte Carlo, the MCMC output
\begin{equation}
  \label{eq:sample}
  X_0, X_1, ..., X_n, ...
\end{equation}
is a correlated sample. Nevertheless, if the Markov
chain is ergodic (i.e., converges in distribution to $\pi$), then
the asymptotic validity is retained under appropriate conditions
(see e.g.\ \cite{MT2009, roberts2004general}).  In particular,
for $M$ large enough, the subsampled random variables
\begin{equation}
  \label{eq:subsample}
  X_M, X_{2M}, ..., X_{nM}, ...
\end{equation}
are approximately independent draws from the target distribution
$\pi$.  For the MCMC-based statistical
inference to be reliable, it is
essential to design algorithms that mix quickly , i.e.\
for which the asymptotic iid property in \eqref{eq:subsample} holds
with reasonably small $M$. (Note however that for estimation
purposes, subsampling is desirable only if the cost of using the
sample is substantial compared to the cost of generating samples,
otherwise the entire sample should be used;
see Section 3.6 of \cite{geyer1992}.)

In a typical MCMC setting, the algorithm is determined by a Markov
chain transition kernel $P_{\theta}$, where $\theta \in \Theta$ is
a high dimensional tuning parameter, e.g. the covariance matrix
of a Random Walk Metropolis proposal \cite{roberts1997weak,
roberts2001optimal}, or the vector of Random Scan Gibbs Sampler
selection probabilities \cite{latuszynski2010adaptive}. Usually
the parameter space $\Theta$ is large, and for ``good'' values of
$\theta$, the iterates $P_{\theta}^n$ will converge quickly to $\pi$
as $n$ increases, resulting in small $M$ in \eqref{eq:subsample}.
However, such ``good'' values are often very difficult to find, and
for most values of $\theta$ the iterates $P_{\theta}^n$ will converge
arbitrary slowly.

Since a good $\theta$ is difficult to find
manually, the idea of adaptive MCMC was introduced
\cite{gilks1998adaptive,haario2001adaptive} to enable the algorithm
to learn ``on the fly'', and redesign the transition kernel during
the simulation as more and more information about $\pi$ becomes
available. Thus an adaptive MCMC algorithm would apply the transition
kernel $P_{\theta_n}$ for obtaining $X_n$ from $X_{n-1}$, where
the choice of the tuning parameter $\theta_n$ at the $n^{\rm th}$
iteration is itself a random variable which may depend on the whole
history $X_0, X_1, ..., X_{n-1}$ and on $\theta_{n-1}$.
When using adaptive MCMC, one hopes that the adaptive parameter
$\theta_n$ will settle on ``good'' values, and that the adaptive
algorithm will inherit the corresponding good convergence
properties.

Unfortunately, since adaptive algorithms violate
the Markovian property, they are inherently difficult to analyse
theoretically.  Whereas the interest in adaptive MCMC is fuelled by
some very successful implementations for challenging problems
\cite{roberts2009examples, andrieu2008tutorial, richardson2010bayesian,
MarkJim, giordani2008efficient, solonen2012efficient}, many seemingly
reasonable adaptive MCMC algorithms are provably transient or converge
to a wrong probability distribution \cite{atchade2005adaptive,
bai2011containment, latuszynski2010adaptive, kl_path}.  Thus, the
theoretical foundations of adaptive MCMC are a very important topic
which is still under active development.

One general and relatively simple approach to analysing adaptive
MCMC algorithms was presented in \cite{roberts2007coupling}, which
showed that the two properties of \emph{Diminishing Adaptation}
and \emph{Containment} were sufficient to guarantee that an adaptive
MCMC algorithm would converge asymptotically to the correct target
distribution (at some rate).  While the Diminishing Adaptation property
is fairly standard and can be easily controlled by the user, the
Containment property is more subtle and can be challenging to verify
(see e.g.\ \cite{bai2008containment}).  This leads to the question
of how important or useful the Containment condition actually is,
especially since it is known (see e.g. \cite{fort2011convergence})
that Containment is not a necessary condition for the ergodicity of
an adaptive MCMC algorithm.

The purpose of this short note is to show that if Containment
does not hold, then the adaptive algorithm will perform very
poorly.  Specifically, with positive probability the adaptive
algorithm will be asymptotically less efficient then \emph{any}
nonadaptive MCMC algorithm. Here efficiency is understood as the total
variation distance convergence time; see \cite{mira1999ordering} for a
different concept of efficiency in terms of asymptotic variance in the
central limit theorem. In effect, the approximate iid property
in \eqref{eq:subsample} will be violated for \emph{any} finite $M$.
We call such algorithms \texttt{AdapFail}, and conclude that they
should not be used.  In particular, this argues that the Containment
condition is actually a reasonable condition to impose on adaptive
MCMC algorithms, since without it they will perform so poorly as to
be unusable.

This paper is structured as follows. In Section \ref{sec:AdapFail},
we define and characterise the class of \texttt{AdapFail} algorithms.
In Section \ref{sec:Containment}, we relate the \texttt{AdapFail}
property to the Containment condition. In Section \ref{sec:examples},
we present a very simple example to illustrate our results.

\section{The class of \texttt{AdapFail} algorithms}\label{sec:AdapFail}

We first introduce necessary notation; see e.g.\ \cite{MT2009,
roberts2004general, roberts2007coupling} for more complete development
related to Markov chains and adaptive MCMC. Let $P_{\theta},$
parametrized by  $\theta \in \Theta,$ be a transition kernel of
a Harris ergodic Markov chain on $(\mathcal{X},\mathcal{F})$ with
stationary distribution $\pi$. Thus for all $x \in \mathcal{X}$ and
$\theta \in \Theta$ we have $\lim_{n\to \infty} \|P^n_{\theta}(x,
\cdot) - \pi(\cdot)\| = 0,$ where $\| \nu(\cdot) - \mu(\cdot)\|:=
\sup_{A \in \mathcal{F}}|\nu(A) - \mu(A)|$ is the usual total
variation norm.  We shall also use the ``$\varepsilon$ convergence time
function'' $M_{\varepsilon}: \mathcal{X}\times \Theta \to \mathbb{N}
$ defined as
\begin{equation}
  \label{eq:M_eps}
  M_{\varepsilon}(x, \theta) := \inf\{n \geq 1: \|P^n_{\theta}(x,
\cdot) - \pi(\cdot)\| \leq \varepsilon \}.
\end{equation}
Let $\{(X_n,\theta_n)\}_{n=0}^\infty$
be a corresponding adaptive MCMC algorithm,
where $X_n$ is updated from $X_{n-1}$ using $P_{\theta_n}$ for
\emph{some} $\Theta$-valued
random variable $\theta_n$ (which might depend on the
chain's history and on $\theta_{n-1}$).
For the adaptive algorithm,
denote the marginal distribution at time $n$ by  
\begin{equation}
  \label{eq:distr_adap}
  A^{(n)}((x,\theta), B) := \mathbb{P}(X_n \in B | X_0 = x, \theta_0 = \theta),
\end{equation} 
and say that the algorithm is
\emph{ergodic} for starting values $x$ and $\theta$ if
\begin{eqnarray}
  \label{eq:adap_erg}
\lim_{n \to \infty} \| A^{(n)}((x,\theta), \cdot) - \pi(\cdot)\| & = & 0.
\end{eqnarray}
Similarly let the ``$\varepsilon$ convergence time
function'' for the adaptive case be
\begin{equation}
  \label{eq:MA_eps}
  M^A_{\varepsilon}(x, \theta) := \inf\{n \geq 1: \|A^{(n)}((x,\theta),
\cdot) - \pi(\cdot)\| \leq \varepsilon \}.
\end{equation}
In both cases the function $M_{\varepsilon}(x,\theta)$ has the same
interpretation: it is
the number of iterations that the algorithm must take
to be within $\varepsilon$ of stationarity.

We are now ready to define the class of \texttt{AdapFail} algorithms.

\begin{defi}
  Let $\{(X_n, \theta_n)\}_{n=0}^\infty$ evolve according to the dynamics
  of an adaptive MCMC algorithm $\mathcal{A}$, with starting values
  $X_0 = x^*$ and $\theta_0 = \theta^*$.
  We say that $\mathcal{A} \in \texttt{AdapFail}$ if there is
  $\varepsilon_{AF} >0$ such that
\begin{equation}
\lim_{M\to\infty}
\limsup_{n\to \infty}
\mathbb{P}\big(M^A_{\varepsilon_{AF}}(X_n, \theta_n) > M  \; |\;  X_0 =
x^*, \theta_0 = \theta^*\big) \; =: \;\delta_{AF} \ > \ 0  
\, .
\label{eq:AdapFail_def_1}
\end{equation}
\end{defi}

\begin{remark}
  Intuitively, \eqref{eq:AdapFail_def_1} says that the convergence
times of the adaptive algorithm will be larger than any fixed value
$M$, i.e.\ that the algorithm will converge arbitrarily slowly and
thus perform so poorly as to be unusable.
\end{remark}
\begin{remark}
  In our experience, the inner limit in \eqref{eq:AdapFail_def_1}
  will typically exist, so that $\limsup_{n\to\infty}$ can be replaced
  by $\lim_{n\to\infty}$ there (and similarly in the related expressions
  below).  However, without assuming specific details about the type
  of adaptation used, we are unable to make conclusive statements about
  what conditions guarantee this.
\end{remark}
\begin{remark}
  For the probabilities in \eqref{eq:AdapFail_def_1}
  to make sense, the function
  $M_{\varepsilon}^A$ needs to be measurable. This follows from the
  Appendix of \cite{roberts1997geometric}. 
  Moreover, if the inner limit in \eqref{eq:AdapFail_def_1} is denoted as
  $\delta_{AF}(M)$, then this sequence is positive and non-increasing
  as a function of $M$, and will thus
  converge to $\delta_{AF}$ as $M \to \infty$.
\end{remark}
\begin{remark}
  To obtain the approximate iid property of the $\{X_n\}$ in
  \eqref{eq:subsample}, we want the
  distribution of $X_{(n+1)M}$ conditionally on the value of $X_{nM}$
  to be within $\varepsilon$ of the stationary measure, i.e.
  \begin{equation}
    \label{eq:conditional_stationarity}
    \big\|\mathcal{L}\big(X_{(n+1)M}\;|\;X_{nM}\big) - \pi \big\|
      \, \leq \, \varepsilon. 
  \end{equation}
  Being an \texttt{AdapFail} algorithm means that for any fixed
  $0<\varepsilon \leq \varepsilon_{AF}$ and some fixed
  $\delta_{AF}>0$, we are infinitely often
  in a regime where \eqref{eq:conditional_stationarity}
  is violated for \emph{any} finite $M$, with
  probability at least $\delta_{AF}$, further illustrating its poor
  performance.
\end{remark}

The following two results shed additional light on the
\texttt{AdapFail} class.

\begin{prop}
  Any ergodic nonadaptive MCMC algorithm $P_{\theta}$ is
  not in \texttt{AdapFail}.
\end{prop}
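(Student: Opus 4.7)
The plan is to observe that for a nonadaptive algorithm the adaptive parameter is constant, i.e.\ $\theta_n \equiv \theta$, so the marginal at time $n$ collapses to $A^{(n)}((x,\theta),\cdot) = P_{\theta}^n(x,\cdot)$ and consequently $M^A_{\varepsilon}(x,\theta) = M_{\varepsilon}(x,\theta)$. Hence the AdapFail statement reduces to a claim purely about the deterministic function $M_{\varepsilon}(\cdot,\theta)$ evaluated at the Markov chain $X_n$ started at $x^*$.

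Fix an arbitrary $\varepsilon>0$ and set $A_M := \{x \in \mathcal{X} : M_{\varepsilon}(x,\theta) > M\}$. Since each $x \mapsto \|P_\theta^n(x,\cdot)-\pi(\cdot)\|$ is measurable (by the appendix of \cite{roberts1997geometric} cited in the excerpt), so is $M_\varepsilon(\cdot,\theta)$, and thus $A_M \in \mathcal{F}$. Then
\begin{equation*}
  \mathbb{P}\bigl(M_{\varepsilon}(X_n,\theta) > M \mid X_0 = x^*\bigr)
  \; = \; P_{\theta}^{n}(x^*, A_M).
\end{equation*}

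The next step is to take $\limsup_{n\to\infty}$ for fixed $M$. Because $P_\theta$ is Harris ergodic, $\|P_{\theta}^n(x^*,\cdot) - \pi(\cdot)\| \to 0$, which in particular yields $P_{\theta}^n(x^*, A_M) \to \pi(A_M)$ (so the $\limsup$ is actually a limit). Then let $M\to\infty$: the sets $A_M$ are decreasing in $M$, and Harris ergodicity guarantees $M_{\varepsilon}(x,\theta) < \infty$ for every $x$, so $\bigcap_M A_M = \emptyset$. By continuity of the finite measure $\pi$, we conclude $\pi(A_M)\downarrow 0$, and hence
\begin{equation*}
  \lim_{M\to\infty}\limsup_{n\to\infty}
  \mathbb{P}\bigl(M_{\varepsilon}(X_n,\theta) > M \mid X_0=x^*\bigr) \; = \; 0.
\end{equation*}
Since this holds for every $\varepsilon>0$, no choice of $\varepsilon_{AF}$ can produce a positive $\delta_{AF}$, so $P_\theta \notin \texttt{AdapFail}$.

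The argument is essentially a two-line application of total variation convergence plus downward continuity of $\pi$; there is no serious obstacle. The only small point worth being careful about is the measurability of $M_\varepsilon(\cdot,\theta)$ and hence of $A_M$, which is exactly the remark the authors already flagged, so it can be invoked without further work.
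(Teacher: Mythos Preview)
Your proof is correct and follows essentially the same approach as the paper: reduce $M_\varepsilon^A$ to $M_\varepsilon$, then combine total variation convergence $P_\theta^n(x^*,\cdot)\to\pi$ with the fact that $\pi\{x:M_\varepsilon(x,\theta)>M\}\to 0$ as $M\to\infty$. The only cosmetic difference is that the paper does this via an explicit $\varepsilon$--$\delta$ bound (choosing $n_0$ and $M_0$ to make the probability $<2\delta$), whereas you phrase the same two ingredients as limits and invoke downward continuity of $\pi$ directly; the underlying argument is identical.
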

\begin{proof}
  For a nonadaptive chain, the quantity $M_{\varepsilon}^A$ in
  \eqref{eq:AdapFail_def_1} becomes $M_{\varepsilon}$ and $\theta^* =
  \theta$. For arbitrary $\varepsilon > 0$ and 
  $\delta > 0$, we shall show that $\delta_{AF} < 2\delta$,
  from which it follows that $\delta_{AF} = 0$. Indeed,
  first find $n_0$ such that $\|P^{n_0}_{\theta}(x^*, \cdot) - \pi(\cdot)\| <
  \delta$, and then find $M_0$ such that
  $\pi( \{x: M_{\varepsilon}(x, \theta) \leq
  M_0\}) > 1-\delta$.  Then for every $n \geq n_0$ and every $M \geq M_0$,
  we can write
  \begin{eqnarray*}
   \mathbb{P}\big(M_{\varepsilon}(X_n,\theta) > M \;|\; X_0 = x^*\big)
   & \leq & \delta + \pi(\{x: M_{\varepsilon}(x, \theta) > M \}) \;\;
   < \;\; 2\delta.
  \end{eqnarray*}
  The result follows.
\end{proof}

\begin{thm}\label{thm:equivalences}
  For an algorithm $\mathcal{A}$ the following conditions are equivalent.
  \begin{itemize}
  \item[(i)] $\mathcal{A} \in \texttt{AdapFail}$.
  \item[(ii)] there are $\varepsilon > 0$ and $\delta > 0$ such that
    for all $x \in \mathcal{X}$, $\theta \in \Theta$, and $K > 0$,
    \begin{equation*}
      \label{eq:AdapFail_2}
      \limsup_{n \to \infty} \ \mathbb{P}\big( M_{\varepsilon}^A(X_n,
      \theta_n) > KM_{\varepsilon}(x,\theta) \; | \; X_0 = x^*,
      \theta_0 = \theta^* \big) \; \geq \; \delta.
    \end{equation*}
  \item[(iii)] there are $\varepsilon > 0$ and $\delta > 0$ such that
    for all $\theta \in \Theta$, $K > 0$, and $y^* \in \mathcal{X}$,
    \begin{equation*}
      \label{eq:AdapFail_3}
      \limsup_{n \to \infty} \ \mathbb{P}\big( M_{\varepsilon}^A(X_n,
      \theta_n) > KM_{\varepsilon}(Y_n,\theta) \; | \; X_0 = x^*,
      \theta_0 = \theta^*, Y_0 = y^*  \big) \; \geq \; \delta,
    \end{equation*}
    where $\{Y_n\}$ is a Markov chain which follows the dynamics
    $P_{\theta}$ and is independent of the adaptive process $\{X_n\}$.
  \end{itemize}
Moreover, in $(ii)$ and $(iii)$ we can take $\delta = \delta_{AF}$.
\end{thm}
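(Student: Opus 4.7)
The plan is to establish the cycle (i) $\Rightarrow$ (ii) $\Rightarrow$ (i) and (i) $\Rightarrow$ (iii) $\Rightarrow$ (i), checking along the way that the $\delta$ in (ii) and (iii) may be taken equal to $\delta_{AF}$. The starting observation is the monotonicity noted in the preceding remark: the quantity $\delta_{AF}(M) := \limsup_{n \to \infty} \mathbb{P}(M^A_{\varepsilon_{AF}}(X_n,\theta_n) > M \mid X_0 = x^*, \theta_0 = \theta^*)$ is non-increasing in $M$ and converges down to $\delta_{AF}$, so (i) is equivalent to the statement that $\delta_{AF}(M) \geq \delta_{AF}$ for every finite $M$.

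I would first dispose of the easy directions. For (i) $\Rightarrow$ (ii), fix $x,\theta,K$; since $P_\theta$ is Harris ergodic, $M_{\varepsilon_{AF}}(x,\theta)$ is a fixed finite integer, so feeding $M := \lceil K M_{\varepsilon_{AF}}(x,\theta) \rceil$ into the observation above yields (ii) with $\varepsilon = \varepsilon_{AF}$ and $\delta = \delta_{AF}$. For (ii) $\Rightarrow$ (i), pick any $x, \theta$; for arbitrary $M$, choose $K$ large enough that $K M_\varepsilon(x,\theta) \geq M$, so $\{M^A_\varepsilon > K M_\varepsilon(x,\theta)\} \subseteq \{M^A_\varepsilon > M\}$, yielding $\delta_{AF}(M) \geq \delta$ for every $M$ and hence $\delta_{AF} \geq \delta > 0$. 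The implication (iii) $\Rightarrow$ (i) is the same argument, using the deterministic bound $M_\varepsilon(Y_n, \theta) \geq 1$ to obtain $\{M^A_\varepsilon > K M_\varepsilon(Y_n,\theta)\} \subseteq \{M^A_\varepsilon > K\}$.

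The substantive step is (i) $\Rightarrow$ (iii). Fix $\theta, K, y^*$ and let $\eta > 0$. Since $P_\theta$ is Harris ergodic, $M_{\varepsilon_{AF}}(\cdot, \theta)$ is $\pi$-almost surely finite, so there exists $L$ with $\pi(\{y : M_{\varepsilon_{AF}}(y,\theta) > L\}) < \eta$; moreover, total variation convergence $\mathcal{L}(Y_n) \to \pi$ yields
\begin{equation*}
\lim_{n \to \infty} \mathbb{P}\bigl(M_{\varepsilon_{AF}}(Y_n, \theta) > L \bigr) \;=\; \pi\bigl(\{M_{\varepsilon_{AF}}(\cdot, \theta) > L\}\bigr) \;<\; \eta,
\end{equation*}
which is a genuine limit, not a mere $\limsup$. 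Combining this with independence of $(X_n, \theta_n)$ and $(Y_n)$ together with the set inclusion $\{M^A_{\varepsilon_{AF}} > KL\} \cap \{M_{\varepsilon_{AF}}(Y_n, \theta) \leq L\} \subseteq \{M^A_{\varepsilon_{AF}} > K M_{\varepsilon_{AF}}(Y_n, \theta)\}$ gives
\begin{equation*}
\mathbb{P}\bigl(M^A_{\varepsilon_{AF}}(X_n, \theta_n) > K M_{\varepsilon_{AF}}(Y_n, \theta)\bigr) \;\geq\; \mathbb{P}\bigl(M^A_{\varepsilon_{AF}}(X_n, \theta_n) > KL\bigr) \;-\; \mathbb{P}\bigl(M_{\varepsilon_{AF}}(Y_n, \theta) > L\bigr).
\end{equation*}
Taking $\limsup_n$ of both sides, the first term on the right is at least $\delta_{AF}$ by (i) applied at $M = KL$, and the second converges to a value strictly less than $\eta$; since $\eta > 0$ was arbitrary, this establishes (iii) with $\varepsilon = \varepsilon_{AF}$ and $\delta = \delta_{AF}$. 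The main obstacle is precisely this $\limsup$ subtraction: it succeeds only because the correction term has a true limit rather than just a $\limsup$ bound, a fact that rests on independence of $(Y_n)$ from the adaptive process together with the Harris ergodicity of $P_\theta$.
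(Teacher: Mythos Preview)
Your proof is correct and follows essentially the same route as the paper's: both arguments hinge on the finiteness of $M_{\varepsilon}(x,\theta)$ for the easy directions, and on combining Harris ergodicity of $P_\theta$ (to make the ``bad'' set $\{M_{\varepsilon}(Y_n,\theta)>L\}$ small) with the AdapFail bound at a fixed level $KL$ for the substantive one. The only organizational difference is that the paper runs the cycle $(i)\Rightarrow(ii)\Rightarrow(iii)\Rightarrow(i)$, using an integral decomposition with respect to $P^n_\theta(y^*,\cdot)$ for $(ii)\Rightarrow(iii)$, whereas you prove $(i)\Rightarrow(iii)$ directly via the set inclusion and the elementary bound $\mathbb{P}(A\cap B)\geq\mathbb{P}(A)-\mathbb{P}(B^c)$; your version is marginally cleaner. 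One small remark: your closing worry about the $\limsup$ subtraction is unnecessary, since $\limsup_n(a_n-b_n)\geq\limsup_n a_n-\limsup_n b_n$ holds in general (write $a_n=(a_n-b_n)+b_n$), so the existence of a genuine limit for the correction term is not actually needed for that step.
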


\begin{proof}
 The implication $(i) \Rightarrow (ii)$ with $\delta = \delta_{AF}$
 and $\varepsilon = \varepsilon_{AF}$ is immediate.
 To verify $(ii)
 \Rightarrow (iii),$ fix $\delta^* >0$ and using monotonicity of the
 total variation distance (see \cite{roberts2004general}) take $n_0$  such that
 $\|P^n_{\theta}(y^*, \cdot) - \pi(\cdot)\| \leq \delta^*$ for every
 $n>n_0$. Next, find $M_0$ such that  $\pi(\mathcal{X}_{M_0}) >
 1-\delta^*,$ where $ \mathcal{X}_{M_0} = \{x: M_{\varepsilon_{AF}}(x, \theta) \leq
  M_0\}.$
 Then for fixed $\theta$, $K$, and $y^*$, compute
 \begin{align*}
   \mathbb{P}\big( &M_{\varepsilon_{AF}}^A(X_n,
      \theta_n)  >  KM_{\varepsilon_{AF}}(Y_n,\theta) \; | \; X_0 = x^*,
      \theta_0 = \theta^*, Y_0 = y^*  \big) \\
& = \;\;  \int_{\mathcal{X}} \mathbb{P}\big( M_{\varepsilon_{AF}}^A(X_n,
      \theta_n) > KM_{\varepsilon_{AF}}(x,\theta) \;  | \; X_0 = x^*,
      \theta_0 = \theta^* \big) P^n_{\theta}(y^*, dx) \quad \\
& \geq \;\; \int_{\mathcal{X}} \mathbb{P}\big( M_{\varepsilon_{AF}}^A(X_n,
      \theta_n) > KM_{\varepsilon_{AF}}(x,\theta) \;  | \; X_0 = x^*,
      \theta_0 = \theta^* \big) \pi(dx) \; - \; \delta^* \quad \\
& \geq \;\; \int_{\mathcal{X}_{M_0}} \mathbb{P}\big( M_{\varepsilon_{AF}}^A(X_n,
      \theta_n) > KM_0 \;  | \; X_0 = x^*,
      \theta_0 = \theta^* \big) \pi(dx) \; - \;  \delta^*  \quad \\
& \geq \;\; (1-\delta^*) \mathbb{P}\big( M_{\varepsilon_{AF}}^A(X_n,
      \theta_n) > KM_0 \;  | \; X_0 = x^*,
      \theta_0 = \theta^* \big) \; - \;  \delta^*. 
 \end{align*}
Consequently
\begin{align*}
  \limsup_{n \to \infty} \ & \mathbb{P}\big( M_{\varepsilon_{AF}}^A(X_n,
      \theta_n)  >  KM_{\varepsilon_{AF}}(Y_n,\theta) \; | \; X_0 = x^*,
      \theta_0 = \theta^*, Y_0 = y^*  \big) \\
& \geq \;\; (1-\delta^*) \limsup_{n \to \infty} \ \mathbb{P}\big( M_{\varepsilon_{AF}}^A(X_n,
      \theta_n) > KM_0 \;  | \; X_0 = x^*,
      \theta_0 = \theta^* \big) \; - \;  \delta^*. 
 \end{align*}
Since $\delta^*$ was arbitrary, $(iii)$ follows from $(ii)$ with
$K=KM_0$ and $\delta = \delta_{AF}.$
For $(iii) \Rightarrow (i)$,
notice that $M_{\varepsilon_{AF}}(Y_n,\theta) \geq 1$, so $(iii)$ gives
 \begin{equation*}
   \limsup_{n\to \infty} \
   \mathbb{P}\big( M_{\varepsilon_{AF}}^A(X_n, \theta_n)
   > K \; | \; X_0 = x^*, \theta_0 = \theta^* \big) \; > \; \delta_{AF},
   \quad \textrm{for every} \ K >0.
 \end{equation*}
The result follows by taking $K \to \infty$.
\end{proof}

\begin{remark}
  Condition $(iii)$ has the interpretation that if we run the adaptive
  algorithm $\{X_n\}$ and a nonadaptive $\{Y_n\}$ independently on two
  computers next to each other, and monitor the $\varepsilon$
  convergence time of both algorithms, then as the simulation
  progress, the $\varepsilon$ convergence time of
  the adaptive algorithm will infinitely often
  be bigger by an arbitrarily large factor $K$,  with probability at least $\delta$,
  i.e.\ $\{X_n\}$ will be arbitrarily worse than $\{Y_n\}$
  (no matter how bad are the tuning parameters $\theta$
  and starting point $Y_0$ for $\{Y_n\}$).
\end{remark}

\section{Relation to the Containment condition}\label{sec:Containment}

The following condition was introduced in
\cite{roberts2007coupling} as a tool to analyse adaptive MCMC algorithms:
\begin{defi}[Containment Condition]
  The algorithm  $\mathcal{A}$ with starting values
  $X_0=x^*$ and $\theta_0 = \theta^*$ satisfies Containment, if for
  all $\varepsilon > 0$ the
  sequence $\{M_{\varepsilon}(X_n, \theta_n)\}_{n=0}^{\infty}$ is
  bounded in probability.
\end{defi}
It is augmented by the usual requirement of Diminishing Adaptation:
\begin{defi}[Diminishing Adaptation] The algorithm  $\mathcal{A}$ with starting values
  $X_0=x^*$ and $\theta_0 = \theta^*$ satisfies Diminishing
  Adaptation, if
  \begin{equation*}
    \label{eq:dim_adap}
    \lim_{n\to \infty} D_n = 0 \quad \textrm{in probability, where}
      \quad D_n:= \sup_{x \in
      \mathcal{X}}\|P_{\theta_{n+1}}(x, \cdot) -  P_{\theta_{n}}(x, \cdot)\|.
  \end{equation*}
\end{defi}

Containment has been extensively studied in \cite{roberts2007coupling}
and \cite{bai2011containment} and verified for large classes of
adaptive MCMC samplers (c.f.\ also \cite{roberts2009examples,
latuszynski2010adaptive}). Together with Diminishing Adaptation, it
guarantees ergodicity. As illustrated in the next section,
it is \emph{not} a necessary condition.  However, it still turns out to be
an appropriate condition to require, due to the following result.

\begin{thm} \label{thm:AdapFail_equiv_noCont}
  Assume the Diminishing Adaptation is satisfied.  Then the Containment
  condition does not hold for $\mathcal{A}$ if and only if
  $\mathcal{A} \in \texttt{AdapFail}$.
\end{thm}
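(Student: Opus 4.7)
The plan is to bridge the non-adaptive convergence time $M_{\varepsilon}$ appearing in Containment and the adaptive convergence time $M_{\varepsilon}^A$ appearing in $\texttt{AdapFail}$ via a coupling estimate in the spirit of the main lemma of \cite{roberts2007coupling}. Concretely, I will establish the following bridge: under Diminishing Adaptation, for every window size $M$ and every $\eta > 0$ there exists $n_0$ such that for all $n \geq n_0$,
\[
\sup_{1 \leq j \leq M}\big\|A^{(j)}((X_n,\theta_n),\cdot) - P_{\theta_n}^{j}(X_n,\cdot)\big\| \leq \eta
\]
holds with probability at least $1-\eta$. The proof couples the restarted adaptive chain from $(X_n,\theta_n)$ with a fixed-$\theta_n$ non-adaptive chain over the window $[n,n+M]$; the coupling succeeds on the event $\{\max_{0 \leq j < M} D_{n+j} \leq \eta/M\}$, whose probability tends to $1$ by Diminishing Adaptation. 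I will use the customary tacit convention that $(X_n,\theta_n)$ is Markovian on the extended state space, so that $A^{(j)}((X_n,\theta_n),\cdot)$ coincides almost surely with the conditional law of $X_{n+j}$ given $(X_n,\theta_n)$.

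For the direction ``Containment $\Rightarrow \mathcal{A}\notin\texttt{AdapFail}$'', I fix $\varepsilon,\eta>0$, apply Containment at level $\varepsilon/2$ to select $M_0$ with $\sup_n \mathbb{P}(M_{\varepsilon/2}(X_n,\theta_n)>M_0)<\eta$, and then apply the bridge with window $M_0$ and tolerance $\varepsilon/2$. A single triangle inequality then yields $\|A^{(M_0)}((X_n,\theta_n),\cdot)-\pi\|\leq\varepsilon$ on the intersection of the two good events, so $M_{\varepsilon}^A(X_n,\theta_n)\leq M_0$ with probability at least $1-2\eta$ for all large $n$. Letting $\eta\to 0$ with $\varepsilon$ fixed, and then $\varepsilon$ arbitrary, drives the $\texttt{AdapFail}$ limit to $0$.

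For the direction ``Containment fails $\Rightarrow \mathcal{A}\in\texttt{AdapFail}$'', I first upgrade the failure of Containment from a $\sup_n$ statement to a $\limsup_n$ statement: since $M_{\varepsilon_0}(X_n,\theta_n)$ is almost surely finite for each fixed $n$ (by Harris ergodicity of every $P_\theta$), the tail $\mathbb{P}(M_{\varepsilon_0}(X_n,\theta_n)>M)$ vanishes in $M$ for each $n$, so the indices realizing the supremum must diverge, producing some $\varepsilon_0,\eta_0>0$ with $\limsup_n \mathbb{P}(M_{\varepsilon_0}(X_n,\theta_n)>M)\geq\eta_0$ for every $M$. With $\varepsilon_{AF}:=\varepsilon_0/2$, the bridge at tolerance $\varepsilon_0/2$ converts the uniform lower bound $\|P_{\theta_n}^{j}(X_n,\cdot)-\pi\|>\varepsilon_0$, which holds for every $j \leq M$ on the event $\{M_{\varepsilon_0}(X_n,\theta_n)>M\}$, into $\|A^{(j)}((X_n,\theta_n),\cdot)-\pi\|>\varepsilon_{AF}$ for every $j \leq M$, i.e.\ $M_{\varepsilon_{AF}}^A(X_n,\theta_n)>M$. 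Hence, up to the good-coupling event of probability $\geq 1-\eta_0/2$, $\limsup_n \mathbb{P}(M_{\varepsilon_{AF}}^A(X_n,\theta_n)>M)\geq \eta_0/2$ for every $M$, giving $\lim_M\limsup_n \geq \eta_0/2 > 0$.

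The hard part will be the bridge itself: obtaining uniform-in-$j$ control of $\|A^{(j)}((X_n,\theta_n),\cdot)-P_{\theta_n}^{j}(X_n,\cdot)\|$ from Diminishing Adaptation alone. Uniformity in $j\leq M$ is automatic once the coupling is constructed (coupled chains agree throughout the window), but identifying the restart marginal $A^{(j)}((X_n,\theta_n),\cdot)$ with the continuation conditional law of $X_{n+j}$ given $(X_n,\theta_n)$ relies on the Markov-on-extended-state-space convention; absent it one would need to restrict to Markovian adaptations or to impose Diminishing Adaptation at every starting state.
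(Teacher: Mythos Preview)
Your approach is essentially the same as the paper's: both build the proof around a ``bridge'' showing that, under Diminishing Adaptation, for any fixed window $M$ and large enough $n$ the adaptive product $\prod_{k=0}^{M} P_{\theta_{n+k}}$ is, with high probability, uniformly close in total variation to $P_{\theta_n}^{M}$; both then use this bridge and a triangle inequality to pass between $M_\varepsilon$ and $M_\varepsilon^A$ in each direction. Your upgrade of the failure of Containment from a $\sup_n$ to a $\limsup_n$ statement via almost-sure finiteness of $M_{\varepsilon_0}(X_n,\theta_n)$ is exactly what the paper records in its equation ``$\forall M,n_0,\ \exists n>n_0$'', and your explicit flagging of the Markov-on-extended-state-space convention is a point the paper leaves implicit.

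One small quantitative slip: on the event $\{\max_{0\le j<M} D_{n+j}\le \eta/M\}$ the telescoping bound gives
\[
\sup_{1\le j\le M}\bigl\|A^{(j)}((X_n,\theta_n),\cdot)-P_{\theta_n}^j(X_n,\cdot)\bigr\|
\ \le\ \sum_{k=1}^{M}\|P_{\theta_{n+k}}-P_{\theta_n}\|
\ \le\ \sum_{k=1}^{M} k\,\frac{\eta}{M}
\ \approx\ \frac{M\eta}{2},
\]
not $\eta$. The paper handles this by taking the threshold $\varepsilon_c/(2M^2)$ rather than $\eta/M$, which makes the telescoping sum come out below $\varepsilon_c/2$. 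Replacing your $\eta/M$ by $c\eta/M^2$ fixes this without changing anything structural.
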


\begin{proof} The proof utilises a construction similar to the coupling
  proof of Theorem~1 of \cite{roberts2007coupling} (see also
  \cite{roberts2013note}).  First, by the
  Diminishing Adaptation property, for any fixed $\delta_c>0$,
  $\varepsilon_c>0$, and integer $M \geq 1$, we can choose $n$ big enough
  that
\begin{equation}\label{eq:bad_set}
\mathbb{P}\Big( 
\bigcup_{k=1}^{M} \{D_{n+k} > {\varepsilon_c \over 2M^2} \}\Big) \; \leq
\; {\delta_c \over 2}.\end{equation}
Now, on the set $\bigcap_{k=1}^M \{ D_{n+k} \leq {\varepsilon_c \over 2 M^2}
\} $ for transition kernels $P_{\theta_n},
P_{\theta_{n+1}}, ...,  P_{\theta_{n+M}},$ by the
triangle inequality we have \begin{align} \label{eq:diminish_telescoping}
\sup_{x \in
  \mathcal{X}}\Big\| & \big(\prod_{k=0}^{M} P_{\theta_{n+k}}\big)(x, \cdot) -
  P^M_{\theta_{n}}(x, \cdot)\Big\| \; \leq \; \\ & \leq \; \sum_{k=1}^{M} \sup_{x \in
  \mathcal{X}} \bigg\|  \Big(\big(\prod_{i=0}^{k+1} P_{\theta_{n+i}}\big)P^{M-k-1}_{\theta_n}\Big)(x, \cdot) -
 \Big(\big(\prod_{i=0}^{k}
 P_{\theta_{n+i}}\big)P^{M-k}_{\theta_n}\Big)(x, \cdot) 
  \bigg\| \qquad  \nonumber \\ \nonumber & \leq \; \sum_{k=1}^{M}
  (k+1){\varepsilon_c \over 2M^2} \; = \; {M+1 \over 4M} \varepsilon_c
  \; < \; {\varepsilon_c \over 2}. 
\end{align}
Consequently we conclude that for $n$ large enough,
 \begin{equation}
  \label{eq:diminishing_for_sequence}
  \mathbb{P}\Big(\textrm{LHS of \eqref{eq:diminish_telescoping} } <
  {\varepsilon_c \over 2} \Big) \; > \; 1-{\delta_c \over 2}.
\end{equation}
  For the ``only if'' part of the theorem,
  note that if Containment does not hold, then for
  the adaptive algorithm in question,
  there is $\varepsilon_c > 0$ and $\delta_c > 0$ such that
  \begin{eqnarray}
    \label{eq:no_containment} \quad
    \forall {M, \,n_0}, \ \exists {n >n_0} \quad \textrm{s.t.} \quad
    \mathbb{P} (M_{\varepsilon_c}(X_n, \theta_n) > M) > \delta_c .
  \end{eqnarray}
By \eqref{eq:diminishing_for_sequence}, we obtain
 \begin{eqnarray}
    \label{eq:AdapFail_obtained} \quad
    \forall {M, \,n_0}, \ \exists {n >n_0} \quad \textrm{s.t.} \quad
    \mathbb{P} (M^{A}_{\varepsilon_c / 2}(X_n, \theta_n) > M) >
    {\delta_c \over 2}.
  \end{eqnarray}
which implies the
\texttt{AdapFail} condition with $\varepsilon_{AF} \geq \varepsilon_c/2$
and  $\delta_{AF} \geq \delta_c/2$.

The proof for the ``if'' part of the theorem is essentially the same.
From~\eqref{eq:AdapFail_def_1}
and~\eqref{eq:diminishing_for_sequence}, one obtains
\eqref{eq:no_containment} with $\varepsilon_c \geq \varepsilon_{AF} /2$ and
$\delta_c \geq \delta_{AF}/2$.
\end{proof}

\begin{remark}
Without assuming Diminishing Adaptation,
Theorem~\ref{thm:AdapFail_equiv_noCont} does not hold.
For example, if $P$ is a fixed ergodic Markov chain,
and $I$ is the identity kernel (which does not move at all), then the
adaptive scheme which simply alternates between $P$ and $I$
converges well (at half-speed compared to $P$) and is not in AdapFail.
However, this scheme violates Containment, since if
$\theta_1$ is the adaptive parameter corresponding to $I$, then 
$M_{\varepsilon}(x, \theta_1) = \infty$.
\end{remark}

\section{A very simple example}\label{sec:examples}

In this section, we analyse a very simple example of an adaptive algorithm,
to illustrate our results about \texttt{AdapFail}.

\begin{example}
Consider the toy example from \cite{fort2011convergence}
with state space $\X=\{0,1\}$ and stationary distribution
$\pi=(1/2,1/2)$, with Markov transition kernels
$$
P_\theta \ = \
\left(
\begin{matrix}
1-\theta & \theta \cr \theta & 1-\theta
\end{matrix}
\right)
\, .
$$
Suppose the $n^{\rm th}$ iteration of the Markov chain
uses kernel $P_{\theta_n}$ (independent of the chain's past
history), where $\theta_n > 0$ and $\sum_n \theta_n = \infty$
but $\theta_n \to 0$ (e.g.\ $\theta_n=1/n$).  Since
the $\theta_n$ converges, clearly Diminishing Adaptation is
satisfied.  On the other hand, as
$\theta \to 0$, $M_\epsilon(x,\theta) \to \infty$.  Hence, this
adaptive algorithm does {\it not} satisfy Containment.  So, by
the above theorems, this algorithm converges more slowly than any
fixed non-adaptive algorithm.
But since $\sum_n \theta_n = \infty$, this algorithm is still
ergodic \cite{fort2011convergence}.  We thus have a (very simple)
example of an adaptive algorithm which is ergodic, but is nevertheless
in AdapFail and has very poor convergence properties.
(A similar result presents if instead $\theta_n \to 1$ with
$\sum_n (1-\theta_n) = \infty$.)
\end{example}

\section{Acknowledgements} K{\L} acknowledges funding from CRISM and
other grants from EPSRC. JSR acknowledges funding from
NSERC of Canada. We thank Gersende Fort and
Gareth O.\ Roberts for helpful discussions.

\bibliographystyle{alpha} 
\bibliography{../../../references}

\end{document}